\documentclass[10pt]{IEEEtran}
%

\usepackage{bm}
\usepackage{epsfig}
\usepackage{graphicx}
\usepackage{mathtools}
\usepackage{cite}
\usepackage{amsmath,amsfonts,amssymb}
\usepackage{eurosym}
\usepackage{delarray}

\newcommand{\val}{\mathcal{V}}

\newcommand{\el}{\ell_{0,1}}

\usepackage{xcolor}

\newcommand{\ul}{\underline}
\newcommand{\mr}{\mathrm}
\ifCLASSINFOpdf
\else
\fi
\usepackage{enumerate}
\usepackage{mathtools}
\usepackage{amsthm}
\usepackage{graphicx,mathptm}
\newtheorem{definition}{\bf Definition}
\newtheorem{proposition}{\bf Proposition}

\usepackage{url}
\usepackage{float}
\usepackage{siunitx}

\begin{document}
%
\title{\Huge Coupled Charging-and-Driving Incentives Design for Electric Vehicles in Urban Networks}

\author{
    
Benoit Sohet, Yezekael Hayel, Olivier Beaude and Alban Jeandin

\thanks{B. Sohet and Y. Hayel are with LIA/CERI, Univ. of Avignon, 84911, Avignon, FRANCE (e-mail: \{benoit.sohet, yezekael.hayel\}@univ-avignon.fr).}

\thanks{B. Sohet, O. Beaude and A. Jeandin are with EDF R\&D, MIRE \& OSIRIS Dept, EDF Lab' Paris-Saclay, 91120, Palaiseau, FRANCE (e-mail: \{benoit.sohet, olivier.beaude, alban.jeandin\}@edf.fr).\vspace{0.1 cm}}

}



\maketitle

\begin{abstract}

Electric Vehicles (EV) impact urban networks both when driving (e.g., noise and pollution reduction) and charging. For the electrical grid, the flexibility of EV charging makes it a significant actor in ``Demand Response" mechanisms. Therefore, there is a need to design incentive mechanisms to foster customer engagement. A congestion game approach is adopted to evaluate the performance of such electrical transportation system with multiple classes of vehicles: EV and Gasoline Vehicles. Both temporal and energy operating costs are considered. The latter is nonseparable as it depends on the global charging need of all EV, which is scheduled in time by a centralized aggregator in function of nonflexible consumption at charging location. Thus, driving and charging decisions are coupled. An adaptation of Beckmann's method proves the existence of a Wardrop Equilibrium (WE) in the considered nonseparable congestion game; this WE is unique when the charging unit price is an increasing function of the global charging need. A condition on the nonflexible load is given to guarantee the monotonicity of this function. This condition is tested on real consumption data in France and in Texas, USA. Optimal tolls are used to control this electrical transportation system and then computed in order to minimize an environmental cost on a simple network topology.

\end{abstract}


\begin{IEEEkeywords}
Congestion game, Electric vehicle, Nonseparable costs, Wardrop equilibrium.
\end{IEEEkeywords}

%
\IEEEpeerreviewmaketitle


\section{Introduction}

\subsection{Motivation}


In 2015, the whole transport sector accounted for about a quarter (23\%) of global energy-related greenhouse gas emissions \cite{iea2017}.
Locally, road transport may undermine urban well-being because of traffic congestion, local air pollution and noise.
Electric vehicles (referred to as EV afterwards, with both battery and plug-in hybrid technologies included) seem to be an answer to both low carbon mobility (if associated with a low carbon electricity production mix) and urban well-being.
However, with around 8.3 millions EV expected in France by 2035 (see middle scenario of \cite{rte2018}), while the subsequent country electrical energy consumption in 2035 will be lower than the actual one, the charging need may reach 10\% in terms of power~\cite{rte2018}, which may lead to local grid constraints, e.g., transformers aging, power losses.
Currently, even if the penetration rate of EV is not really significant at the scale of a country, it can be already substantial locally\footnote{See e.g. the case of ``\^Ile-de-France", with more than 20 000 EV in circulation: \url{http://www.automobile-propre.com/dossiers/voitures-electriques/chiffres-vente-immatriculations-france/} (in French).}.

However, the \textit{flexibility} of EV charging -- in terms of compatibility with end users mobility needs and technical capabilities for load management -- makes it a significant actor in ``Demand Response" mechanisms~\cite{JACQUOT18} which is an emerging field in ``smart grids"\footnote{This is particularly true in comparison to other typical electrical tasks, like heating, cooking, lighting, for which there is no potential to ``smartly" schedule the associated electricity consumption profile.}. It consists in controlling consumption profile by, e.g., postponing usages in time, or reducing the level of power consumed, and with different objectives for the electrical system: local management of production-consumption balance, mitigating the impact on the electricity network \cite{beaude2016}, constituting a ``Virtual Power Plant" by aggregating flexible usages (see, e.g., \cite{vasirani2013} in the case of EV), etc. It is revolutionizing the traditional paradigm of the electricity system, where almost only generation units were flexible to ensure its effective operation.
In this context, taking into account charging strategies into everyday EV driving decisions will become an important issue in smart cities, particularly for urban networks \cite{brenna2012}. A first problem is the design of charging incentives (e.g., under the form of pricing) to share -- in space and time -- public EV charging stations (or EVCS) \cite{tan2017}.
Directly related to this day-by-day setting is the problem of charging infrastructure sizing, in terms of: (a) number of charging points; (b) location in space, and; (c) available levels of power (for each charging point and/or at the scale of a station, or network of stations) \cite{lam2014}.

To solve these problems, EV driving decisions must be also taken into account because of the driving-and-charging coupling.
This coupling is clearly observable during widespread holidays departures or notable sport events: the majority of driving EV need to charge at public EVCS, where there could be a significant waiting time and available power reduction (when allocated/shared between plugged EV) due to simultaneous power demands.
As an example of incentive mechanism, Tesla EVCS may adapt the charging prices in order to encourage EV to charge in empty EVCS rather than congested ones\footnote{\url{https://www.tesla.com/support/supercharging}.}.
Another example of this coupling worth mentioning comes from the French company CNR (\textit{Compagnie Nationale du Rh\^one}) and its ``Move In Pure" charging subscription: in order to guarantee that the source of electricity is renewable, charging is incentivized to be scheduled in some hours of the day and locations.
In a more futuristic vision, the EV charging-and-driving coupling can be transposed into a charging-by-driving one, with an inductive charging system (under the road) as suggested in \cite{wei2017}.
\subsection{Game theory design}

The model introduced in this work takes into account this coupling between EV driving and charging decisions to offer an accurate representation of EV behavior in order to test incentives aimed at, e.g., mitigating their impact on the electricity network.

In our game theory model, drivers choose their travel path depending on the costs they face (here travel duration and energy consumption), which depend themselves on the choices of the others (here through traffic congestion, and in the proposed charging problem).
This \textit{congestion game} is \textit{heterogeneous} because multiple vehicle classes are considered (EV and gasoline vehicles, or GV), with different costs, and \textit{nonatomic} as the large number of vehicles considered is approximated as infinite, and replaced by a continuous mass.

The charging unit price is the solution of a charging problem managed by an \textit{aggregator}, which schedules in time the charging operation of the whole EV fleet.
For simplification purposes, its objective is supposed to be aligned with the one of the \textit{Electricity Network Operator} (ENO): minimizing local electricity distribution costs (centralized optimization problem).
A use case of this framework is when all EV sign a contract whereby they experience cheaper electricity fares in exchange of delegating the control of their charging operations to the aggregator (see, e.g., \cite{han2010} for a presentation of such stakeholder in the case of EV aggregation).

To design incentives on top of this model, the driving and charging parts can be seen as the lower level of a bilevel framework illustrated in Fig.~\ref{fig:schema_coupled}.
At the upper level, a \textit{Transportation Network Operator} (TNO) and the ENO may impose tolls and electricity fares that induce a financial cost to drivers, thus influencing their driving and charging choices.
This aspect is illustrated in the numerical section.

\begin{figure}
\centering
\includegraphics[scale=0.32]{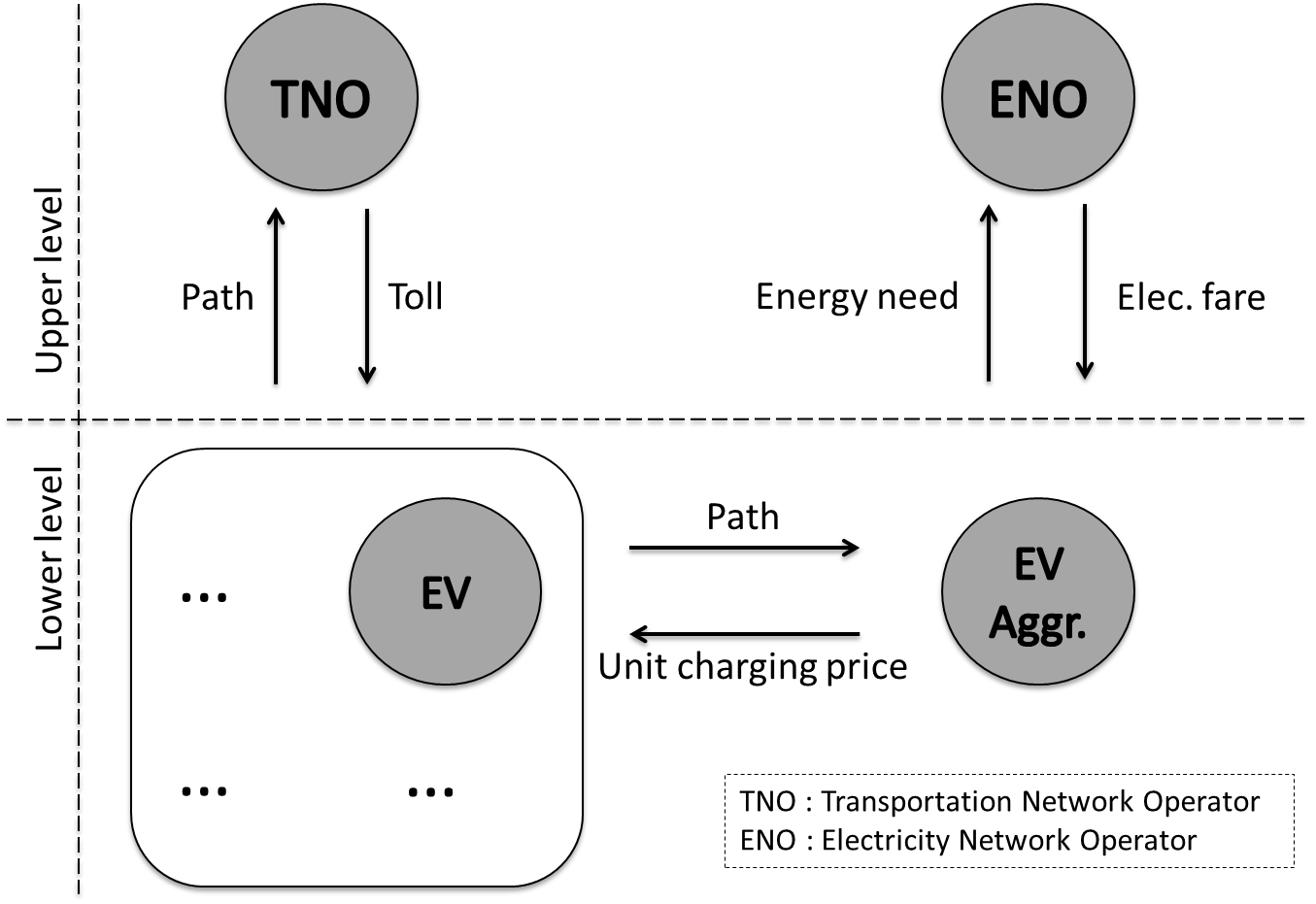}
\caption[Schematic representation of the coupled choice of driving and charging.]{Schematic representation of the coupled choice of driving and charging.
\textit{The driving problem gives an electricity charging need to the charging problem, which in turn provides a charging unit price. Our system is composed by four entities: Transportation Network Operator (TNO), Electricity Network Operator (ENO), Electric Vehicles (EV) and an EV aggregator.}
}
\label{fig:schema_coupled}
\end{figure}


Note that in this model, driving and charging decisions of EV users are coupled: the total energy charging need the aggregator schedules depends on the driving strategies of all EV.
In turn, the cost for a unit of energy used to drive -- named here \textit{charging unit price} -- is minimized by the aggregator depending on the total electricity distribution costs, which will impact the driving strategies.
Therefore, the charging unit price depends on the total energy charging need, which makes it \textit{nonseparable}: the energy consumption cost of an EV depends on the driving strategies of all EV, even those on different paths.
Finally, travel duration and energy consumption costs are nonlinear: the problem can be classified as a \textit{nonatomic multiclass congestion game with nonseparable and nonlinear cost functions}.
The interactions between the four different entities constituting this complex economical system are summarized here and in Fig.~\ref{fig:schema_coupled}:
\begin{itemize}
    \item (a large number of) Electric Vehicles (EV): each EV determines its travel path which minimizes its total cost, including congestion and energy,
        \item the EV aggregator: determines the optimal charging schedule for EV, minimizing the charging costs,
    \item the \textit{Transportation Network Operator} (TNO): determines optimal tolls on each road to control the traffic in the transportation network,
    \item the \textit{Electricity Network Operator} (ENO): determines the electricity fare in order to manage electricity demand and cover electricity production costs.
\end{itemize}

Our problem is related to intelligent electrical transportation systems in the sense that our model enables the control of the behavior of EV drivers. Game theory is a perfect tool to understand intelligent and adaptive behavior of drivers related to control. 

\subsection{Related works}

There is an emerging literature on the coupling of driving and charging decisions, identified in the review paper~\cite{WEI19}.
Let us detail the two contributions closest to our work in terms of methodology.
In~\cite{ALIZADEH17}, only EV are considered, and their charging choices are represented in an ``extended transportation network": each station is replaced by a set of virtual arcs, each one corresponding to a specific charging (energy) amount.
EV driving consumption is distance-dependent, each station has electricity production costs and sets a charging unit price maximizing its own profit.
This considered game is then separable and no theoretical result is given on the equilibrium. 
In~\cite{WEI18}, GV are added to the problem of~\cite{ALIZADEH17}, but EV energy consumption is not taken into account and the charging need is supposed to be the same for all EV.
Nevertheless, all stations belong to the same local grid so that a peak demand at one station impacts each station's charging unit price.
This makes their game nonseparable, though theoretical and numerical results are given on a (separable) decomposed model version.
Other works mentioned in~\cite{WEI19} are similar to the two previous ones, like~\cite{HE16} where the vehicle's destination offering a minimal cost is chosen, or \cite{ZHU18} where a fleet operator chooses the proportion of its vehicles to charge instead of taking customers.
Finally, in a more recent paper \cite{ALIZADEH19}, the authors propose a hierarchical model in which multiple charging network operators manage their own charging station, inducing a particular behavior of EV. The problem is formulated as a Mathematical Problem with Equilibrium Constraints (MPEC) which is NP hard. A heuristic based on solving a sequence of integer programs with convex objective and constraints is suggested to tackle the problem, but the computational complexity is still important. Numerical illustrations are proposed with a six nodes graph and a power network model associated with.

Unlike the previous references, our model does not rely on iterations between the driving and charging problems but offers a closed-form expression of the costs.
Our solution to deal with this added complexity is based on the following works.
Firstly, basic Traffic Assignment Problem (TAP) with single-class customers is defined and studied in \cite{SHEFFI85}. It is shown that when all customers are equally affected by the congestion (symmetric setting) and when the cost functions are increasing, the equilibrium is unique. In recent years, there has been an increasing interest for mixed TAP (MTAP) where two or more classes of vehicles are considered \cite{JIANG14}. 
Uniqueness of the Wardrop Equilibrium\footnote{The concept of equilibrium used in game theory in the case of a continuous mass of players.} (WE) in mixed TAP is proved in \cite{ALTMAN05} when the cost functions are the same for every customer, up to an additive constant. The complexity induced by mixing different types of traffic in a routing game comes from the difficulty to use standard approaches like Beckmann's one in order to determine the WE. In fact, different impacts on travel costs cause an asymmetry in the Jacobi matrix \cite{Dafermos72}. Therefore different techniques and approaches like nonlinear complementarity, variational inequality and fixed-point problems for characterizing equilibrium are reviewed in \cite{Forian95}.
Secondly, studies of nonatomic games with nonseparable costs are nontrivial and very few papers deal with this framework. In \cite{Chau03}, the authors generalize the bound obtained by Roughgarden and Tardos \cite{Rough2000} on the Price of Anarchy\footnote{Standard game theoretic measure of equilibria performance (w.r.t. a centralized optimum).} for nonseparable, symmetric and affine costs functions. In \cite{Perakis04}, the author considers a similar framework but with asymmetric and nonlinear costs. The bounds obtained are tight and are based on a semidefinite optimization problem. In \cite{Correa08}, the authors propose a new proof for the Price of Anarchy in nonatomic congestion games, particularly with nonseparable and nonlinear cost functions. Their geometric approach leads to obtain in a simple manner the bounds found in \cite{Chau03} and \cite{Perakis04}. Similarly to this game theoretic literature, we focus on the algorithmic part and techniques to characterize and compute an equilibrium based on potential functions and Beckmann's techniques, for nonatomic routing games with nonseparable cost functions. In fact, our framework induces particular cost functions which enables us to characterize the WE as the minimum of a global function. 

Our model was first presented in our previous work~\cite{SOHET18} on a toy example.
The main contributions of this paper are:
\begin{enumerate}
\item an explicit formulation of the charging unit price based on a water-filling approach as in~\cite{mohsenian10}, which depends on EV energy consumption and which is integrated in a coupled driving-charging model in a closed-form fashion. In particular, it extends the approach of \cite{ALIZADEH17} where the considered charging problem does not allow a temporal power scheduling, the charging cost being directly calculated based on the total EV energy need.
Furthermore, with the electrical grid costs internalized directly inside the charging unit price, there is no need in our work to iterate between the driving and charging problems to determine the game's equilibrium and therefore, no problem of convergence as in~\cite{ALIZADEH17}; 
\item the proof of existence of a WE in a nonseparable routing game -- by finding an equivalent Beckmann's function -- and its uniqueness under a ``smoothness condition" on the nonflexible load the ENO must meet (in addition to the EV charging need) in a general transportation network. Numerically, this condition is verified on real data sets and key sensitivity parameters are identified (seasonal effects and consumption habits);
\item the design of incentives to minimize an environmental cost for the TNO: numerically, optimal tolls can be computed by the proposed model; significant gains are obtained relatively to the no-toll case.\\
\end{enumerate}


\begin{table}[]
    \centering
    \begin{tabular}{cl}
     \hline
      $\mathcal{N}$   &  Set of nodes\\
     $\mathcal{K}$ & Set of Origin-Destination pairs \\
     $\mathcal{A}$    & Set of arcs (directed links)\\
     $\mathcal{R}$ & Set of paths (or routes)\\
     $D_k$ & Travel demand of O-D $k$\\
     $l_a$ & length of arc $a$\\
     $X_e$ & Proportion of EV \\
     $f_{r,s}$ & Flow rate of vehicle class $s$ on path $r$ \\
     $x_{a,s}$ & Flow rate of vehicle class $s$ on arc $a$\\
     $d_a(x_a)$ & Travel duration on arc $a$\\
     $d^0_a$ & Free flow reference time\\
     $C_a$ & Capacity of arc $a$ in vehicles per unit of time\\
     $\tau$ & Economic value of travel duration per unit of time\\
     $t_{a,s}$ & Toll for vehicle class $s$ on arc $a$\\
     $m_s$ & Energy consumption for vehicle class $s$ per distance unit\\
     $\lambda_s$ & Unit energy price for vehicle class $s$\\
     $c_{r,s}(\mathbf{x})$& Total driving cost for vehicle class $s$ on path $r$\\
     $T$ & Number of charging time slots \\
     $\ell_t$& Total electricity load during time slot $t$\\
     $\ell_{0,t}$& Nonflexible electricity consumption during time slot $t$\\
     $\ell_{e,t}$& Flexible electricity consumption during time slot $t$\\
     $f_t(\ell_t)$ & Local electricity distribution costs (i.e., electricity fare) during time slot $t$\\
     $L_e$& Total charging need\\
     $c_\text{env}(\mathbf{x})$ & Environmental cost\\
     \hline
     \vspace{0.2mm}
    \end{tabular}
    \caption{Table of Notation     \vspace{-6mm}
}
    \label{tab:my_label}
\end{table}

The paper is organized as follows.
Our coupled model is introduced in Sec.~\ref{sec:model} and the coupling between the driving and charging problems through the charging unit price is explained in Sec.~\ref{sec:charging}.
Equilibrium analysis of this model is proposed in Sec.~\ref{sec:optimal} and numerical illustrations based on real data analysis are exposed in Sec.~\ref{sec:numerical}. Finally, conclusions and perspectives are given in last Sec.~\ref{sec:conc}.
\section{A multiclass nonseparable routing game}
\label{sec:model}
Two types (or classes) of vehicles are considered: Electric Vehicles (EV) and Gasoline Vehicles (GV)\footnote{Note that the following study can be generalized to more classes.}.
All vehicle users determine rationally (i.e. optimally) their travel path, with perfect knowledge of all the costs. 
The novelty is that the costs considered here depend not only on the congestion of the driving path, but also on the energy used to travel along it. Moreover, the energy cost for EV depends on the global charging demand of electricity, which depends itself on the driving strategies of all EV in a way described in next section. \\


\ul{\textbf{Notations}}: bold symbols refer to vectors, subscript $e$ to EV variables and $g$ to GV ones. \\


\vspace{-2mm}

The road network is represented by a graph $G=(\mathcal{N},\mathcal{A})$, where $\mathcal{N}$ is the set of \textit{nodes} and $\mathcal{A}$ is the set of \textit{arcs} (directed links), each arc $a\in\mathcal{A}$ being associated with a length $l_a$ (expressed in km typically).
The set of \textit{paths} (or routes) of $G$ is $\mathcal{R}$, and a quantity $\delta_{a,r}$ is defined as equal to 1 if arc $a\in\mathcal{A}$ is part of the path $r\in\mathcal{R}$, 0 otherwise.
This way, the length of route $r\in\mathcal{R}$ can be expressed by: $l_{r}=\sum_a \delta_{a,r} l_a$.

The set of all \textit{Origin-Destination (O-D) pairs} is $\mathcal{K}$, and~each O-D pair $k\in\mathcal{K}$ is associated with a \textit{travel demand} $D_k$ and a set of different possible paths $r_k\in\mathcal{R}_{k}$ linking O to D (there may be more than one).
Travel demand $D_k$ represents the proportion of vehicles which need to travel from O to D per unit of time (e.g., per day, considering morning and evening commuting periods of working days) and is the input of the driving problem (set to $0$ if there is no demand for O-D pair $k$ and $\sum_k D_k = 1$).
Note that the driving problem is stationary: all demands $D_k$ are averaged over time.
The proportion of EV among all vehicles is written $X_e$, and the proportion of GV is then given by $X_g = 1-X_e$. 

Every path $r\in\mathcal{R}$ is associated with an EV, a GV and a total vehicle flow rates: respectively $f_{r,e}$, $f_{r,g}$ and $f_r = f_{r,e} + f_{r,g}$.
Meeting all travel demands imposes constraints on these flows:
\begin{equation}
\forall k\in\mathcal{K}\,,~ \forall s\in\{e,g\}\,,\quad
\begin{cases}
\forall r_k\in\mathcal{R}_k\,,~ f_{r_k, s} \geq 0\,,\\
\sum_{r_k\in\mathcal{R}_k} f_{r_k, s} = X_s ~D_k\,.
\end{cases}
\label{eq:contraintes}
\end{equation}
Here it is assumed that the travel demand on each O-D pair is proportionally distributed between classes: the demand for O-D pair $k\in\mathcal{K}$ by class $s\in\{e,g\}$ is $X_s~D_k$.

Similarly, every arc $a\in\mathcal{A}$ is associated with an EV, a GV and a total vehicle flow per unit of time: respectively $x_{a,e}$, $x_{a,g}$ and $x_a = x_{a,e} + x_{a,g}$, which are related to route flows as such:
\begin{equation}
\quad\quad\quad\quad\quad x_{a,s} = \sum_{k\in\mathcal{K}}\sum_{r_k\in\mathcal{R}_k} \delta_{a,r_k} f_{r_k,s}\, \quad (s=e,g) \, .
\label{eq:x_f}
\end{equation}
The vector containing all EV and GV arc flows is noted $\mathbf{x}$.\\
There are three sorts of costs on every arc $a\in\mathcal{A}$:

(i) The \textit{travel duration} on arc $a$ is given by the following congestion function (from the Bureau of Public Roads)~\cite{spiess90}:
\begin{equation}
d_a(x_a)= d^0_a\left[1+\alpha\left(x_a/C_a\right)^\beta\right],
\label{eq:BPR}
\end{equation}
where $C_a$ is the capacity of the link in vehicles per unit of time and $d_a^{0}=\frac{l_a}{v_{a}}$ is the \textit{free flow} reference time, with $v_a$ the maximum speed limit on the arc.
 The two remaining parameters $\alpha>0$ and $\beta>1$ are adjusted empirically.
 Note that travel duration is the same for both classes, and that it depends on the total flow on the arc $x_a$.
 
(ii) Travelers on arc $a$ are also subject to a \textit{toll}, $t_{a,e}$ or $t_{a,g}$, which depends on their vehicle class. This is the control of the TNO (see Fig.~\ref{fig:schema_coupled}).

(iii) Finally, vehicles are \textit{energy} consuming, and the driving cost should depend on this feature too. 
Here, the consumption model is distance-dependant: $m_e$ (resp. $m_g$), the electricity (resp. fuel, in liter) consumed per distance unit, is constant and does not depend on speed profiles, etc.
With $\lambda_s$ the charging/fueling unit price, the consumption cost on arc $a$ is ($s=e,g$):
\vspace{-0.24 cm}
\begin{equation}
\underbrace{l_a}_{\text{km}} \times \underbrace{m_s}_{\text{(kWh,L)/km}}\times \underbrace{\lambda_s}_{\text{\euro/(kWh,L)}} (\text{\euro})\,.
\label{eq:cons}
\end{equation}

Summing these three costs, the total driving~cost for a vehicle of type $s \in \{e,g\}$~choosing~route~$r\in\mathcal{R}$~is~given~by:
\begin{eqnarray}
\label{eq:gen_cost}
c_{r,s}(\mathbf{x}) = \sum\limits_{a\in\mathcal{A}} \delta_{a,r}\times\Big(\tau d_{a}(x_a)+ t_{a,s}+ l_a m_s \lambda_s\Big),
\end{eqnarray}
where $\tau$ is the cost of one unit of time spent driving. Note that here the price of fuel is assumed constant: $\lambda_g$ is an exogenous parameter, it does not depend on drivers decisions. On the contrary, the charging unit price $\lambda_e$ is endogenous as it will depend on EV driving strategies: this is the subject of the next section.
\vspace{-0.2 cm}
\section{Charging price: a centralized water-filling approach}
\label{sec:charging}

In our framework, the charging unit price, which is integrated into the driving cost function, is chosen by an \textit{aggregator} to balance its costs.
These costs are aligned with the ones of the ENO and limited to the global production cost: the losses associated with the dispatching of power to dispersed charging stations is not considered here and will be the subject of a future work.
Furthermore, this aggregator manages the charging operation of all EV by scheduling in time the total charging need in order to minimize its global production cost.
The resulting charging unit price then affects retroactively the path choices of EV in the stationary routing game through the driving cost function.


The scheduling only considers the aggregated (over all EV) energy need $L_e$: using \eqref{eq:cons} (constant energy consumption per distance unit), $L_e$ is proportional to the total travelled distance by all EV:
\vspace{-0.24 cm}
\begin{equation}
L_e = \sum_{k\in\mathcal{K}}\sum_{r_k\in\mathcal{R}_k}f_{r_k,e} l_{r_k} m_e
= m_e \sum_{a\in\mathcal{A}} x_{a,e} l_a \,.
\label{eq:L_e}
\end{equation}
Note that $L_e$ depends on all EV flows, and that a higher flow $x_{a,e}$ on an arc $a$ does not automatically lead to a higher $L_e$ because of the travel demand constraints~\eqref{eq:contraintes}: there might be fewer EV traveling on a longer arc $b$ at the same time.

The  scheduling problem is written in discrete time: the aggregator schedules the global charging need $L_e$ resulting from the stationary driving problem among a finite number $T\geq 2$ of time slots $t\in\{1,\dots,T\}$ (typically, this discretized period represents a day).
More precisely, the aggregator chooses which portion $\ell_{e,t}$ of the aggregated charging need $L_e$ to charge during each time slot $t$ in order to minimize its costs (here, the electricity generation).
This scheduling is not trivial since these costs may depend inherently on the time slot $t$ (e.g., solar panels produce only during the day) and are increasing with the total electricity load $\ell_t$ during that time slot $t$~\cite{WOOD12}. Here, the following ``proxy"\footnote{It does not include dynamic or location effects.} is considered to represent this cost:
\begin{equation}
 \forall t\in\{1,\dots,T\}\,, \quad   f_t(\ell_t) = \eta_t (\ell_t)^n\qquad\qquad(n\geq2)
\end{equation}
Note that quadratic cost functions $f_t$ ($n=2$) are good proxies, widely used in the literature~\cite{mohsenian10}: they will be considered in the numerical section.

At each time slot $t$, the total electricity load $\ell_t$ is made of two components: the portion $\ell_{e,t}$ of the aggregated charging need which is scheduled at $t$, and a \textit{nonflexible (fixed) consumption} $\ell_{0,t}$ which includes electrical usages that are present in charging locations where EV are plugged (e.g., household appliances when charging at home, tertiary ones at professional sites). While nonflexible consumption $\ell_{0,t}$ is a parameter of the charging problem, $\ell_{e,t}$ is the control variable. As modeled in \cite{mohsenian10} (and in many other papers about EV smart charging), EV consumption has to be scheduled depending on other electrical usages, the impact on the grid being dependent on the aggregate consumption (obtained as the sum of flexible and nonflexible profiles). Note that here, all EV share the same (local) electricity network. In an extended setting, the aggregate consumption profile could be calculated at different charging locations, with possibly different electricity network topologies, constraints and nonflexible profiles. 

 
%
Formally, the charging problem solved by the EV aggregator is stated as follows:
\vspace{-0.1 cm}
\begin{equation}
\label{eq:pb-flex-load-schedul}
\min_{\mathbf{\ell_e}=\left(\ell_{e,t}\right)_t} ~ \sum_{t=1}^T f_t\left(\ell_{0,t}+\ell_{e,t}\right), \quad \mbox{s.t.} \quad
\begin{cases}
\forall t\,,~~\ell_{e,t} \geq 0  \, ,\\
\sum_{t=1}^T\ell_{e,t}=L_e \, .
\end{cases}
\end{equation}
\noindent
For a global charging need $L_e$ issued from the driving problem (second constraint in \eqref{eq:pb-flex-load-schedul}), the aggregator has to determine~the aggregated charging profile $\bm{\ell}_e:=(\ell_{e,1},\ldots,\ell_{e,T})$ which minimizes the total charging cost. Note that this problem is parametrized by both the global charging need $L_e$ (from~the~driving~problem,~so~endogenous~in~the~coupled~charging-driving~model)~and~the~nonflexible~load~profile~$\bm{\ell}_0$~(exogeneous). Finally, note that the disaggregation of the aggregate charging profile $\bm{\ell}_{e}$ obtained as the output of this scheduling problem is assumed possible here: thanks to the large number of EV, the aggregated charging profile can be decomposed into charging profiles for every EV. In practice, the underlying assumption is that all EV can find an available charging station at any moment and any location (no reservation or queuing effect here) and that there are enough EV not driving for the charging purposes.

Suppose now without loss of generality (because there are no dynamical effect taken into account here) that time slots are ordered such that $f_1'(\ell_{0,1}) \leq \cdots \leq f_T'(\ell_{0,T}) $, adding time slot $T+1$ (with $\ell_{0,T+1} = + \infty$) to unify notations.
Considering the two following auxiliary parameters: 
\vspace{-0.1 cm}
\begin{equation}
\alpha_t = \sum_{s = 1}^t\ell_{0,s}\quad \mbox{and} \quad
\beta_t = \sum_{s = t+1}^{T} f_s(\ell_{0,s})\,,
\end{equation}
respectively representing the cumulative nonflexible load up to time $t$, and the cost due to nonflexible load after time $t$, the optimal value $\val$ of problem \eqref{eq:pb-flex-load-schedul} is given by the following Proposition.


\begin{proposition}
Given a nonflexible vector $\bm{\ell}_{0_{_{_{}}}}$, if the global charging need verifies $L_e \in \left]L_e^{(\bar{t}-1)},L_e^{(\bar{t})}\right]$ with the energy thresholds $L_e^{(t)}=  \left[\sum_{s=1}^t \left(\eta_{t+1}/\eta_s\right)^{1/(n-1)}\right]\ell_{0,t+1}-\alpha_t$ for ${t\in\{1,\dots,T\}}$,
then the solution of the energy scheduling problem~\eqref{eq:pb-flex-load-schedul} yields the optimal value:
\begin{equation}
\val(L_e)= (\sum_{s=1}^{\bar{t}} \eta_s^{-\frac{1}{n-1}})^{-(n-1)} \times \left(L_e+\alpha_{\bar{t}}\right)^n~+~\beta_{\bar{t}}\,.
\label{eq:water_filling_min}
\end{equation}
\label{prop:water_filling}
\end{proposition}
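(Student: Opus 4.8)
The plan is to exploit that the scheduling problem \eqref{eq:pb-flex-load-schedul} is a convex program with a strictly convex separable objective (each $f_t(\ell)=\eta_t\ell^n$ is strictly convex on the feasible range for $n\geq 2$) and affine constraints, so that the Karush-Kuhn-Tucker (KKT) conditions are both necessary and sufficient and characterize the minimizer. First I would form the Lagrangian with a multiplier $\nu$ for the equality constraint $\sum_t \ell_{e,t}=L_e$ and multipliers $\mu_t\geq 0$ for the inequalities $\ell_{e,t}\geq 0$. Stationarity gives $f_t'(\ell_{0,t}+\ell_{e,t})=\mu_t-\nu$ for every $t$, and combined with complementary slackness $\mu_t\ell_{e,t}=0$ this yields the classical water-filling picture: writing $w:=-\nu$ for the common \emph{water level}, every active slot (with $\ell_{e,t}>0$) satisfies $f_t'(\ell_{0,t}+\ell_{e,t})=w$, while every inactive slot (with $\ell_{e,t}=0$) satisfies $f_t'(\ell_{0,t})\geq w$.

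The key structural step is to show that the active set is a prefix $\{1,\dots,\bar t\}$ of the time slots. I would argue by contradiction: if some slot $t$ were inactive while a later slot $t'>t$ were active, then strict monotonicity of $f_{t'}'$ together with $\ell_{e,t'}>0$ would give $w=f_{t'}'(\ell_{0,t'}+\ell_{e,t'})>f_{t'}'(\ell_{0,t'})$, and the assumed ordering $f_1'(\ell_{0,1})\leq\cdots\leq f_T'(\ell_{0,T})$ would then force $w>f_{t'}'(\ell_{0,t'})\geq f_t'(\ell_{0,t})\geq w$, a contradiction. Hence the minimizer activates exactly the $\bar t$ slots of smallest marginal nonflexible cost.

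With the prefix structure in hand, the remaining steps are explicit computations. Using $f_t'(\ell)=n\eta_t\ell^{n-1}$, each active slot obeys $\ell_{0,t}+\ell_{e,t}=(w/(n\eta_t))^{1/(n-1)}$; summing the equality constraint over $t=1,\dots,\bar t$ gives $L_e+\alpha_{\bar t}=(w/n)^{1/(n-1)}\sum_{s=1}^{\bar t}\eta_s^{-1/(n-1)}$, which I invert to express $w$ in terms of $L_e$. The slot-$(t+1)$ activation threshold is the value of $L_e$ at which $w$ reaches $f_{t+1}'(\ell_{0,t+1})$; substituting $(w/n)^{1/(n-1)}=\eta_{t+1}^{1/(n-1)}\ell_{0,t+1}$ into the previous relation reproduces exactly $L_e^{(t)}=\big[\sum_{s=1}^t(\eta_{t+1}/\eta_s)^{1/(n-1)}\big]\ell_{0,t+1}-\alpha_t$, so that $L_e\in\,]L_e^{(\bar t-1)},L_e^{(\bar t)}]$ corresponds precisely to $\bar t$ active slots, monotonicity of $L_e$ in $w$ guaranteeing that these intervals tile the admissible range without overlap. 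Finally I would evaluate the objective: the inactive slots contribute $\beta_{\bar t}$ by definition, while each active term equals $\eta_t(\ell_{0,t}+\ell_{e,t})^n=(w/n)^{n/(n-1)}\eta_t^{-1/(n-1)}$; summing over active slots and eliminating $w$ via the inverted relation collapses the active contribution to $\big(\sum_{s=1}^{\bar t}\eta_s^{-1/(n-1)}\big)^{-(n-1)}(L_e+\alpha_{\bar t})^n$, which is precisely \eqref{eq:water_filling_min}.

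I expect the main obstacle to be the prefix/monotonicity bookkeeping, namely cleanly establishing that the active set is an initial segment and that the thresholds $L_e^{(t)}$ partition the range of feasible $L_e$ into the correct intervals, rather than the final algebraic substitution, which becomes routine once the water level is expressed in terms of $L_e$.
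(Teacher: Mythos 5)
Your proposal is correct and follows essentially the same route as the paper: both rest on the water-filling characterization (equal marginal costs $f_t'(\ell_{0,t}+\ell_{e,t})$ on the active slots, bounded above by the marginal costs of the inactive ones), derive the thresholds $L_e^{(t)}$ as the charging needs at which slot $t+1$ activates, and then perform the same algebra to eliminate the water level and obtain \eqref{eq:water_filling_min}. The only difference is presentational: you justify the active-set structure formally via KKT stationarity, complementary slackness, and a prefix argument, where the paper simply asserts the equal-marginal-cost property and builds the solution incrementally threshold by threshold.
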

\vspace{-0.4 cm}

\begin{proof}
\label{append:water_filling}
Without loss of generality, time slots are assumed to be ordered by marginal cost $f'_t(\ell_{0,t})$, i.e. $f'_1(\ell_{0,1})\leq f'_2(\ell_{0,2})\leq \ldots \leq f'_T(\ell_{0,T})$.
At the optimal scheduling of a given charging need $L_e$, time slots used share the same marginal cost, lower than the marginal costs of the unused time slots.
This way, the aggregator schedules $L_e$ in the order of time slot indices ($t=1$, then $t=2$, etc.).

As a consequence, energy thresholds $L_e^{(t)}$ are defined as the charging need from which the aggregator starts using time slot $t+1$ ($0\leq t < T$).
At each threshold, the charging need $L_e^{(t)}$ is scheduled among the time slots already used ($s\leq t$) so that the resulting marginal costs are all equal to the one of the empty time slot $t+1$: $f'_s\left(\ell_{0,s}+\ell_{e,s}^{(t)}\right) = f'_{t+1}\left(\ell_{0,t+1}\right)$.
As cost functions $f_t$ are convex, this method ensures that the marginal cost associated with each infinitesimal portion of $L_e^{(t)}$ was lower than $f'_{t+1}\left(\ell_{0,t+1}\right)$.
Mathematically, $L_e^{(t)}=\sum_{s=1}^t \ell_{e,s}^{(t)}$ with: 
\vspace{-1 mm}
\begin{equation*}
\left\{
\begin{array}{ll}
\ell_{e,s}^{(t)} = \left(\frac{\eta_{t+1}}{\eta_s}\right)^{n-1}\ell_{0,t+1} -\ell_{0,s} & \text{for } s \leq t \,,\\
    \ell_{e,s}^{(t)} = 0 & \text{for } s>t\,,
\end{array}
\right.
\end{equation*}
which yields the energy thresholds formula.

Thus, if $\bar{t}$ is such that the charging need $L_e$ is in $]L_e^{(\bar{t}-1)}, L_e^{(\bar{t})}]$, then the optimal charging profile $\bm{\ell_e}$ is such that for $t>\bar{t}$, $\ell_{e,t}=0$  and for $t\leq\bar{t}$, $f'_t\left(\ell_{0,t}+\ell_{e,t}\right) = f'_1\left(\ell_{0,1}+\ell_{e,1}\right)$, so that:
\begin{equation*}
\begin{aligned}
\val(L_e) = \sum_{t=1}^{\bar{t}} \eta_t \left(\frac{\eta_1}{\eta_t}\right)^\frac{n}{n-1}\left(\ell_{0,1}+\ell_{e,1}\right)^n + \beta_{\bar{t}}\,.
\end{aligned}
\end{equation*}

Finally, $\ell_{e,1}$ is deduced using constraint $L_e = \sum_{t=1}^T\ell_{e,t}$\,:
\begin{equation*}
    \sum_{t=1}^{\bar{t}} \left(\frac{\eta_1}{\eta_t}\right)^\frac{1}{n-1} \left(\ell_{0,1}+\ell_{e,1}\right) = L_e +\alpha_{\bar{t}}\,.
\end{equation*}
\vspace{-5mm}

\end{proof}

Note that when electricity costs functions $f_t$ are not time-dependent ($\forall t\,,~\eta_t = \eta$ so that $f_t=f$), the optimal aggregated charging profile $\bm{\ell}_e$ of Prop.~\ref{prop:water_filling} has a water-filling structure \cite{mohsenian10}.

Having optimally scheduled the aggregated charging need $L_e$, the aggregator then determines the charging unit price:
\begin{eqnarray}
\lambda_e(L_e)
=\frac{\val(L_e)}{L_e+\sum_{t=1}^T \ell_{0,t}},
\end{eqnarray}
with $\val(L_e)$ expressed by Equation (\ref{eq:water_filling_min}). By defining this way the charging unit price, the aggregator makes EV and nonflexible appliances pay equally (per energy unit) for the total electricity costs caused by their aggregated electricity consumption.
This means that for consumers, electricity usages during peak hours are as much expensive as during off-peak hours.
This also means that households may have a smaller electricity bill thanks to the efforts made by the EV community.

This function $\lambda_e$ is differentiable for any $L_e $, even though function $\val$ was piecewise-defined in Prop.~\ref{prop:water_filling}.
Additionally, the following section will show the importance of the interval of $L_e$~values~for~which~$\lambda_e$~is~increasing.~To~this~end, the next Proposition gives a necessary and sufficient condition -- on nonflexible load $\bm{\ell}_0$ -- to have an increasing $\lambda_e$.

\begin{proposition}
The charging unit price function $\lambda_e$ is increasing on $\mathbb{R}_+^*$ if and only if:
\begin{equation}
\frac{\sum_{t=1}^T \tilde{\eta}_t {\tilde{\ell}_{0,t}}^n}{\sum_{t=1}^T \tilde{\ell}_{0,t}} \quad \leq  \quad n\,,
\quad \text{with }
\begin{cases}
\tilde{\ell}_{0,t} = \frac{\ell_{0,t}}{\el}\,,\\
\tilde{\eta}_t = \frac{\eta_t}{\eta_1}\,.
\end{cases}
\label{eq:suff_cond}
    \end{equation}
\label{prop:sufficient_condition}
\end{proposition}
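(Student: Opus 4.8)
The plan is to reduce the ``for all $L_e$'' requirement to a single inequality at the left endpoint $L_e \to 0^+$, exploiting the convexity of $\val$. First I would differentiate $\lambda_e = \val(L_e)/(L_e + \alpha_T)$, where $\alpha_T = \sum_{t=1}^T \ell_{0,t}$ is the total nonflexible load, to obtain
\begin{equation*}
\lambda_e'(L_e) = \frac{\val'(L_e)\,(L_e+\alpha_T) - \val(L_e)}{(L_e+\alpha_T)^2}.
\end{equation*}
Since the denominator is positive, $\lambda_e$ is increasing on $\mathbb{R}_+^*$ if and only if the numerator $N(L_e) := \val'(L_e)(L_e+\alpha_T) - \val(L_e)$ is nonnegative there; this is just the statement that the marginal cost $\val'$ exceeds the average cost $\lambda_e$.

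The key step is to observe that $N$ is itself non-decreasing. Indeed $\val$ is the value function of the convex program~\eqref{eq:pb-flex-load-schedul}, in which $L_e$ enters only the right-hand side of the equality constraint; hence $\val$ is convex and, by the water-filling characterization of Prop.~\ref{prop:water_filling}, continuously differentiable (the marginal cost $\val'(L_e)$ equals the common optimal marginal $f_t'(\ell_{0,t}+\ell_{e,t})$, which is continuous across the thresholds $L_e^{(t)}$). Differentiating gives $N'(L_e) = \val''(L_e)(L_e+\alpha_T) \geq 0$, the two copies of $\val'$ cancelling. Because $N$ is continuous and non-decreasing, $N \geq 0$ on all of $\mathbb{R}_+^*$ if and only if $N(0^+) \geq 0$: the binding case is the smallest charging need.

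It then remains to evaluate $N(0^+)$ using the first piece ($\bar t = 1$) of Prop.~\ref{prop:water_filling}, where the coefficient collapses to $\eta_1$, so that $\val(L_e) = \eta_1(L_e + \el)^n + \sum_{s=2}^T \eta_s \ell_{0,s}^n$ on $]0, L_e^{(1)}]$. This yields $\val(0) = \sum_{s=1}^T \eta_s \ell_{0,s}^n$ and $\val'(0) = n\eta_1 \el^{\,n-1}$, whence
\begin{equation*}
N(0^+) = n\,\eta_1\, \el^{\,n-1}\sum_{t=1}^T \ell_{0,t} \; - \; \sum_{t=1}^T \eta_t\, \ell_{0,t}^n .
\end{equation*}
Dividing through by $\eta_1 \el^{\,n}$ and introducing the normalized quantities $\tilde{\ell}_{0,t} = \ell_{0,t}/\el$ and $\tilde{\eta}_t = \eta_t/\eta_1$ turns $N(0^+) \geq 0$ into exactly~\eqref{eq:suff_cond}. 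The converse is immediate: if $\lambda_e$ is increasing then $\lambda_e'(L_e) \geq 0$ for every $L_e > 0$, and letting $L_e \to 0^+$ returns the same inequality. The only point demanding care, and the main obstacle, is justifying the regularity of $\val$ at the piecewise thresholds---continuity of $N$ together with the sign of $N'$ across the kinks---so that the reduction to the single point $L_e \to 0^+$ is legitimate; this is precisely where the convexity and $C^1$-smoothness of the value function are used.
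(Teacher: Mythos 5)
Your proof is correct, and it reaches the paper's condition by a cleaner route than the paper's own argument. Both proofs share the same architecture --- show that the sign of $\lambda_e'$ can only switch from negative to positive as $L_e$ grows, so that the whole condition collapses to a single inequality at $L_e\to 0^+$, which is then evaluated on the first piece of Prop.~\ref{prop:water_filling} --- but they differ in the monotone quantity and in how its monotonicity is established. The paper proves that $\lambda_e'$ itself is increasing by computing $\lambda_e''$ explicitly and checking the sign of a degree-two polynomial in $L_e$ whose coefficients involve $n$, $\alpha_t$, $\alpha_T$ and $\beta_t$; this is the heaviest computation in that proof. You instead track the numerator $N(L_e)=\val'(L_e)(L_e+\alpha_T)-\val(L_e)$ of $\lambda_e'$ (the marginal-minus-average cost) and observe that $N'(L_e)=\val''(L_e)(L_e+\alpha_T)\ge 0$, the two copies of $\val'$ cancelling, so that monotonicity of $N$ follows in one line from convexity of the value function of the convex program~\eqref{eq:pb-flex-load-schedul}. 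This buys a shorter and more conceptual argument, and it makes transparent why the regularity of $\val$ across the thresholds (the $C^1$ matching of the common marginal costs) is the only point requiring care. One small refinement: as written, $N$ non-decreasing with $N(0^+)\ge 0$ only yields $\lambda_e'\ge 0$, i.e.\ $\lambda_e$ non-decreasing; to conclude that $\lambda_e$ is increasing in the strict sense used by the paper, note that on each piece $\val''(L_e)=n(n-1)A_{\bar t}(L_e+\alpha_{\bar t})^{n-2}>0$ for $L_e>0$, so $N$ is in fact strictly increasing on $\mathbb{R}_+^*$ and $N(0^+)\ge 0$ forces $N>0$, hence $\lambda_e'>0$, for all $L_e>0$. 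With that one-line addition your proof is complete and fully equivalent to the paper's statement.
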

\vspace{-0.3 cm}

\begin{proof}
\vspace{-3 mm}
\label{append:suff_cond}
The proof consists in showing that $\lambda_e'$ is increasing on $\mathbb{R}_+$ and in particular strictly increasing on $[0,L_e^{(T-1)}[$. Indeed, in that case, $\lambda_e$ would be increasing on $\mathbb{R}_+^*$ if and only if $\lambda_e'(0)\geq 0$, condition leading to Eq.~(13) of Prop.~2.
Note that the case with no nonflexible consumption ($\bm{\ell_0}=0$) is ruled out, as $\lambda_e(L_e) = L_e^{n-1}$ would be unconditionally increasing.

Even though the numerator $\val$ of $\lambda_e$ is piecewise-defined, this function is $C^1$ on $\mathbb{R}_+$ as it is based on continuous function over intervals (see Eq.~\eqref{eq:water_filling_min}). Moreover, $\lambda_e'$ is piecewise differentiable and for $L_e\in\left]L_e^{(t-1)},L_e^{(t)}\right[$ ($1\leq t\leq T$):
\vspace{-0.2 cm}
\begin{equation*}
\begin{aligned}
\lambda_e''(L_e) = \frac{2\beta_t}{(L_e+\alpha_T)^3} + H_t\frac{(L_e+\alpha_t)^{n-1}}{(L_e+\alpha_T)^3}\bigg[(n-1)(n-2) L_e^2 \\+ 2(n-2)\left(n\alpha_T-\alpha_t\right)  L_e+ \left(2\varepsilon^2 -2n\varepsilon +n^2-n\right)\alpha_T^2 \bigg] \,,
\end{aligned}
\end{equation*}
where $\varepsilon = \frac{\alpha_t}{\alpha_T}$.
The second term is non-negative since the three polynomial coefficients are non-negative for $n\geq 2$.
The first term is also non-negative, and even positive for $t\leq T-1$.
The proof is completed using the continuity of $\lambda_e'$.
\end{proof}

Note that when the marginal electricity cost (w.r.t. the aggregated charging need $L_e$: $\val'(L_e)$) is smaller than the charging unit price $\lambda_e(L_e)$, Prop.~\ref{prop:sufficient_condition} does not hold and $\lambda_e$ is (locally) decreasing. 
Also observe that a smooth nonflexible profile ($\forall t, \, \left|\ell_{0,t}/\ell_{0,1}-1\right|$ small) leads to a low ratio in~\eqref{eq:suff_cond}, which induces an increasing $\lambda_e$.

\section{Optimal routing considering energy cost}
\label{sec:optimal}
Going back to the study of the routing game in which players' costs are defined in~\eqref{eq:gen_cost}, a natural concept of equilibrium is the  Wardrop Equilibrium (WE) defined as follows:

\begin{definition}
A flow $\mathbf{x}^*$ is a Wardrop Equilibrium (WE) if and only if:
\begin{equation}
\forall s \in \{e,g\}\,, \quad c_{r,s}(\mathbf{f}^*)\leq c_{r',s}(\mathbf{f}^*),
\label{eq:WE}
\end{equation}
for all paths $r, r'$ with $r$ such that $f_{r,s}>0$.
\label{def:WE}
\end{definition}

Literally it means that at a WE, for any given vehicle class, the travel costs on all the paths actually used ($\{r: f_{r,s}>0\}$) are equal, and less than those which would be experienced on any unused path ($\{r': f_{r',s}=0\}$).
Note that WE flow $\mathbf{f}^*$) (or equivalently $\mathbf{x}^*$) will be denoted with an asterisk thereafter.


Even if the considered routing game is nonseparable and with multiple types of flows, the structure of the charging unit price $\lambda_e$ enables us to apply a similar approach to the method of Beckmann \textit{et al.} in~\cite{BECKMANN56} which is a well-known convex optimization technique~in~order~to~find~equilibria~in~routing~games. 

\begin{proposition}
The local minima of the following constrained optimization problem are WE:
\begin{equation}
\begin{array}{c}
\displaystyle \min_{\bm{x}} \, \mathcal{B}(\bm{x})\hspace{2 mm}\mr{s.t.} \hspace{2mm}\eqref{eq:contraintes}\,, \hspace{1mm} \mr{with} \\
\displaystyle \hspace{-2 mm}\mathcal{B}(\bm{x})=\tau \sum_{a\in\mathcal{A}} \int_0^{x_a}\hspace{-1 mm} d_a(x) \mr{d}x \hspace{1 mm} + \hspace{-0.2 cm} \sum_{\substack{a\in\mathcal{A}\\s\in\{e,g\}}}\hspace{-2 mm} t_{a,s} x_{a,s} + \hspace{-0.2 cm}
\sum_{s\in\{e,g\}} \int_0^{L_s(\mathbf{x_s})} \lambda_s(x) \mr{d}x \,,
\label{eq:beckmann}
\end{array}
\end{equation}
extending the definition of the global charging need $L_e$ in~\eqref{eq:L_e} to the GV class with $L_g = m_g\sum_a x_{a,g}l_a$.
\label{prop:beckmann}
\end{proposition}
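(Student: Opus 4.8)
The plan is to recognize the objective $\mathcal{B}$ in \eqref{eq:beckmann} as a \emph{potential} for the path-flow cost map $\mathbf{f}\mapsto\big(c_{r,s}(\mathbf{x})\big)_{r,s}$, and then to read off the Wardrop conditions \eqref{eq:WE} from the first-order Karush--Kuhn--Tucker (KKT) optimality conditions of the constrained problem. Since the feasible set \eqref{eq:contraintes} is described entirely by affine equalities and nonnegativity constraints, a constraint qualification holds automatically, so the KKT conditions are necessary at every local minimum. This is exactly the direction the statement asserts (local minima $\Rightarrow$ WE), and no convexity of $\mathcal{B}$ is required for it.

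First I would view $\mathcal{B}$ as a function of the path flows $f_{r,s}$, the arc flows being their affine images \eqref{eq:x_f} $x_{a,s}=\sum_{k,r_k}\delta_{a,r_k}f_{r_k,s}$, and compute $\partial\mathcal{B}/\partial f_{r,s}$ by the chain rule. Using $\partial x_a/\partial f_{r,s}=\delta_{a,r}$ and the fundamental theorem of calculus, the travel-duration term contributes $\tau\sum_a\delta_{a,r}d_a(x_a)$, and the toll term contributes $\sum_a\delta_{a,r}t_{a,s}$. The main point --- and the only place where nonseparability is felt --- is the energy term $\int_0^{L_s(\mathbf{x}_s)}\lambda_s(x)\,\mathrm{d}x$: differentiating it by the fundamental theorem of calculus and the chain rule gives $\lambda_s(L_s)\,\partial L_s/\partial f_{r,s}$, and since $L_s=m_s\sum_a x_{a,s}l_a$ is linear in the class-$s$ flows one has $\partial L_s/\partial f_{r,s}=m_s\sum_a\delta_{a,r}l_a=\sum_a\delta_{a,r}l_a m_s$. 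Adding the three contributions yields exactly $\partial\mathcal{B}/\partial f_{r,s}=\sum_a\delta_{a,r}\big(\tau d_a(x_a)+t_{a,s}+l_a m_s\lambda_s\big)=c_{r,s}(\mathbf{x})$, i.e. $\mathcal{B}$ is a potential whose gradient is the cost. I expect this step to be the crux: it works only because the nonseparable price $\lambda_s$ enters the arc costs solely through the \emph{scalar} aggregate $L_s$, so the antiderivative $\int_0^{L_s}\lambda_s$ is well defined and its gradient reproduces the per-path marginal energy cost. Equivalently, this is the structural reason the energy-cost Jacobian is symmetric --- it is the rank-one form $l_a l_b m_s^2\lambda_s'(L_s)$ within each class and zero across classes --- which is what makes Beckmann's technique applicable despite the asymmetry usually introduced by multiple classes.

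Finally I would write the KKT system: introducing a multiplier $\mu_{k,s}$ for each demand constraint $\sum_{r_k}f_{r_k,s}=X_s D_k$ and $\nu_{r_k,s}\geq 0$ for each $f_{r_k,s}\geq 0$, stationarity reads $c_{r_k,s}(\mathbf{x}^*)=\mu_{k,s}+\nu_{r_k,s}$ together with complementary slackness $\nu_{r_k,s}f_{r_k,s}^*=0$. For a used path ($f_{r_k,s}^*>0$) this forces $\nu_{r_k,s}=0$, hence $c_{r_k,s}=\mu_{k,s}$; for an unused path it gives $c_{r_k,s}=\mu_{k,s}+\nu_{r_k,s}\geq\mu_{k,s}$. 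Thus within each O-D pair $k$ and class $s$ all used paths share the common cost $\mu_{k,s}$, which is no larger than the cost on any unused path --- precisely Definition~\ref{def:WE}. This closes the argument.
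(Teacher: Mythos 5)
Your proposal is correct and follows essentially the same route as the paper: both identify $\mathcal{B}$ as a potential whose gradient in path-flow coordinates reproduces $c_{r,s}$, and both then read the Wardrop conditions off the KKT system (stationarity plus complementary slackness). Your write-up is somewhat more explicit than the paper's on two points it glosses over --- the linear-constraint qualification that makes KKT necessary at a local minimum, and the chain-rule computation showing why the nonseparable energy term $\int_0^{L_s}\lambda_s$ still differentiates to the per-path energy cost --- but these are elaborations, not a different argument.
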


\begin{proof}
\label{append:beck}
Let a path flow vector $\mathbf{f}$ be a solution of the minimization problem~\eqref{eq:beckmann} of Beckmann's function $\mathcal{B}$ under constraints of~(1). Then there exist constants $\lambda_{k,s}$ and $\mu_{r,s}$ (for all $k\in\mathcal{K}$, $r\in\mathcal{R}$, $s\in\{e,g\}$) such that $\mathbf{f}$ is a solution of the corresponding Karush-Kuhn-Tucker (or KKT) conditions:
\vspace{-0.1 cm}
\begin{equation*}
\begin{cases}
\displaystyle\frac{\partial \mathcal{L}}{\partial f_{r,s}} = 0\,,
~\frac{\partial \mathcal{L}}{\partial \lambda_{k,s}} = 0
\\\mu_{r,s}\times f_{r,s} = 0
\\\mu_{r,s},~f_{r,s}\geq 0
\end{cases}
\hspace{-0.2 cm}
\text{where }
\begin{cases}
\displaystyle
\mathcal{L} = \mathcal{B}(\mathbf{x}) -\sum_{s=e,g} \sum_{k\in\mathcal{K}} \lambda_{k,s} h_{k,s} \vspace{-0.1 cm}\\ 
\displaystyle~ -\sum_{s=e,g}\sum_{r\in\mathcal{R}} \mu_{r,s} f_{r,s}\\
\displaystyle h_{k,s} = \sum_{r_k\in\mathcal{R}_k} f_{r_k,s} - X_s D^k\,.
\end{cases}
\end{equation*}
Using relation~(2), differentiating $\mathcal{L}$ w.r.t. $f_{r,s}$ (with $r\in\mathcal{R}_k$) gives:
\vspace{-2 mm}
\begin{equation*}
\frac{\partial \mathcal{L}}{\partial f_{r,s}} = c_{r,s} - \lambda_{k,s}-\mu_{r,s}\,,
\end{equation*}
so that using the KKT conditions: $\forall r\in\mathcal{R}_k, \, c_{r,s}\geq\lambda_{k,s}$, with equality for $r$ such that $f_{r,s}>0$.

Hence, if for any $k\in\mathcal{K}$ there are two paths $r,r'\in\mathcal{R}_k$ with different costs $c_{r,s}<c_{r',s}$, then $f_{r',s} = 0$.
Otherwise, $f_{r',s} > 0$ and $c_{r',s}=\lambda_{k,s}\leq c_{r,s} \,$, which is contradictory.
Thus, the KKT conditions correspond exactly to the Def.~1 of a WE.
\end{proof}

The result of previous Proposition gives a simple method to find WE thanks to standard optimization algorithms which can be used to solve this convex (under some conditions) minimization problem. Moreover it leads to some theoretical properties of WE, starting with the following Proposition:
\begin{proposition}
\label{cor:exist-and-uniq-we}
There exists a WE, and a sufficient condition for uniqueness is that the charging unit price $\lambda_e$ is increasing.
\end{proposition}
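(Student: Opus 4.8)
The plan is to read off existence from a compactness argument applied to the Beckmann program of Proposition~\ref{prop:beckmann}, and to obtain uniqueness from convexity of the same objective. For existence, I would first observe that the feasible set cut out by the demand constraints \eqref{eq:contraintes} is a nonempty, bounded, closed and convex polytope in the path-flow variables, which through the linear relation \eqref{eq:x_f} maps onto a compact convex set of admissible arc flows $\mathbf{x}$. The objective $\mathcal{B}$ in \eqref{eq:beckmann} is continuous: the congestion integrals involve the continuous BPR map \eqref{eq:BPR}, the toll term is linear, and the energy integrals are continuous because $\lambda_g$ is constant and $\lambda_e$ is differentiable (hence continuous), as noted after Proposition~\ref{prop:sufficient_condition}. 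The Weierstrass theorem then guarantees that $\mathcal{B}$ attains a global minimum on the feasible set; being in particular a local minimum, Proposition~\ref{prop:beckmann} certifies that this minimizer is a WE, which proves existence.

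For uniqueness under the hypothesis that $\lambda_e$ is increasing, the key step is to establish that $\mathcal{B}$ is convex, which I would do block by block. Each congestion integral $\int_0^{x_a} d_a(x)\,\mathrm{d}x$ is convex because $d_a$ is increasing in $x_a$ (indeed $\beta>1$ in \eqref{eq:BPR}), and it remains convex in $\mathbf{x}$ since $x_a = x_{a,e}+x_{a,g}$ is linear. The toll term is linear, hence convex. For the energy block, $L_s(\mathbf{x}_s)$ is a linear functional of the class flows, so $\int_0^{L_s}\lambda_s$ is the composition of a one-dimensional primitive with a linear map; this is convex exactly when $\lambda_s$ is nondecreasing, which is automatic for $s=g$ (constant $\lambda_g$) and is granted for $s=e$ precisely by the hypothesis. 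Summing the three blocks shows $\mathcal{B}$ is convex, so its set of minimizers is convex and every WE produced by Proposition~\ref{prop:beckmann} is a global minimizer.

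The main obstacle is that this convexity is not strict in the disaggregated arc flows: the congestion term depends only on the total flow $x_a$ and the EV energy term only on the scalar aggregate $L_e$, so the Hessian of $\mathcal{B}$ has a nontrivial kernel — reallocating mass between the two classes while holding each $x_a$ and $L_e$ fixed leaves $\mathcal{B}$ unchanged. My approach is therefore to extract uniqueness only of the quantities that the objective genuinely pins down, which are exactly the ones that determine the costs. Using that $d_a$ is strictly increasing, strict convexity in each $x_a$ forces the total arc flows to coincide at any two minimizers; using that $\lambda_e$ is strictly increasing, strict convexity in $L_e$ forces the charging need to coincide as well, and then $L_g=m_g\sum_a x_{a,g}l_a$ is determined since $\sum_a x_{a,g}l_a=\sum_a x_a l_a - L_e/m_e$ is fixed by the now-common totals. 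Consequently all minimizers share the same total arc flows and the same pair $(L_e,L_g)$, so by \eqref{eq:gen_cost} they induce identical travel costs $c_{r,s}$: the Wardrop equilibrium is unique in the sense relevant to the players. I would close by noting that when $\lambda_e$ fails to be increasing the EV energy block may lose convexity, which is why the monotonicity condition characterized in Proposition~\ref{prop:sufficient_condition} is precisely what secures this uniqueness.
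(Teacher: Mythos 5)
Your existence argument is the paper's own (continuity of $\mathcal{B}$ on the compact feasible polytope, Weierstrass, then Proposition~\ref{prop:beckmann}), but your uniqueness argument takes a genuinely different --- and more careful --- route. The paper claims $\mathcal{B}$ is \emph{strictly} convex: it writes the Hessian in block form $\left(\begin{smallmatrix} M+D & D\\ D & D\end{smallmatrix}\right)$ and checks $\bm{u}^{T}\mathcal{H}\bm{u}>0$ only along within-class swap directions, illustrated ``without loss of generality'' on a single-class, single-O-D, three-path example; strict convexity then gives a unique minimizer, hence a unique WE. You instead establish plain convexity block by block and extract strict convexity only in the aggregates the objective actually depends on --- the total arc flows $x_a$ (via strictly increasing $d_a$) and the scalar $L_e$ (via strictly increasing $\lambda_e$) --- concluding that all minimizers share these aggregates and therefore induce identical costs $c_{r,s}$. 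Your kernel observation is substantive, not a technicality: a feasible direction that exchanges EV and GV mass between two equal-length routes of the same O-D pair leaves every $x_a$ and $L_e$ unchanged, so $\bm{u}^{T}\mathcal{H}\bm{u}=0$; this is precisely the cross-class direction that the paper's single-class reduction never tests, and it shows the class-disaggregated flow vector need not be unique (two identical parallel links with no tolls give a continuum of WE with the same totals). What the paper's route buys is the stronger literal statement of the proposition; what yours buys is a conclusion that survives this degeneracy --- uniqueness of total link flows, of $L_e$, and of the equilibrium costs, which is the form of uniqueness the TNO and ENO actually need for prediction and incentive design.

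One loose end to tighten: to conclude that \emph{all} Wardrop equilibria share these aggregates, you need the converse of Proposition~\ref{prop:beckmann} --- every WE is a \emph{global} minimizer of $\mathcal{B}$ --- and not merely that minimizers are WE. You only invoke the forward direction (``every WE produced by Proposition~\ref{prop:beckmann}''), which leaves open the possibility of a WE lying outside the minimizer set. The fix is immediate from material already in the paper: the proof of Proposition~\ref{prop:beckmann} shows that the Wardrop conditions of Definition~\ref{def:WE} coincide exactly with the KKT conditions of the program \eqref{eq:beckmann}, and under the convexity you established (with linear constraints \eqref{eq:contraintes}) every KKT point is a global minimum. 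State this equivalence explicitly and your argument closes.
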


\begin{proof}
\label{append:uniqueness}
The proof of this proposition is composed of two parts. First, the existence of the WE is based on the continuity property of the Beckmann's function $\mathcal{B}$ on the compact feasible set, so it has at least one minimum, which is a WE according to Prop.~3.
Second, the proof of the uniqueness involves a deeper analysis. Assuming that $\lambda_e$ is increasing, the proof consists in showing that $\mathcal{B}$ is strictly convex.
Indeed, in that case, the KKT conditions are equivalent to the minimization problem~(15) under the linear constraints of~(1).
As the KKT conditions are equivalent to Def.~1 of a WE (as seen in Appendix~C), this means that the WE correspond to the minima of $\mathcal{B}$, which is unique because of $\mathcal{B}$ strict convexity.

$\mathcal{B}$ is strictly convex if its hessian $\mathcal{H}$ ($\mathcal{B}$ is twice continuously differentiable because $\lambda_e$ is $C^1$) verifies $\bm{u}^T\mathcal{H}(\bm{f})\bm{u}>0$ for all path flows $\bm{f}$ in the interior of the feasible space $\mathcal{S}^\circ$ and all directions $\bm{u}\in\mathcal{S}$.

$\mathcal{H}$ has a block structure (for any $\bm{f}\in\mathcal{S}^\circ$):
\begin{equation*}
\mathcal{H}\left(\left(f_{r_k,k,s}\right)_{r_k,k,s}
\right) =
\left(\begin{array}{c|c}
M+D & D \\
\hline
D & D
\end{array}\right),
\end{equation*}
with matrices $D=\left(D_{r,r'}\right)_{r,r'\in\mathcal{R}}$ and $M=\left(M_{r,r'}\right)_{r,r'\in\mathcal{R}}$ defined as:
\vspace{-3 mm}
\begin{equation*}
\begin{cases}
D_{r,r'} = \sum_a\delta_{a,r}~\delta_{a,r'}~d'_a\left(x_a\right)\\
M_{r,r'} = k~l_r~ l_{r'}
\end{cases}
\text{where } ~k = {m_e}^2 ~\lambda_e'(L_e)\,.
\end{equation*}

It is sufficient to verify $\bm{u}^T\mathcal{H}(\bm{f})\bm{u}>0$ only along independent directions $\bm{u}\in\mathcal{S}$ chosen as in Fig.~\ref{fig:only}.
Without loss of generality, the proof will be illustrated only on the example of Fig.~\ref{fig:only}, where there are only one class $s$ and one O-D pair $k$ (associated with a travel demand $D_k$) which is connected by three paths $r$, $r'$ and $r''$.
In this case, there are two independent directions $\bm{u}=-f_{r,s}+f_{r',s}$ and $\bm{v}$ (see Fig.~1).

If $s =g$, then:
\begin{equation*}
\bm{u}^T \mathcal{H}(\bm{f}) \bm{u} = \sum_a (\delta_{a,r'} - \delta_{a,r})^2 d'_a(x_a)\quad\geq0\,,
\end{equation*}
since $d'_a$ are positive in the interior $\mathcal{S}^\circ$.
More precisely, as $r\neq r'$, there exists $a$ such that $\delta_{a,r} \neq \delta_{a,r'}$, so that the previous inequality is strict.
The proof is similar for direction $\bm{v}$, or for $s=e$ where:
\begin{equation*}
\bm{u}^T \mathcal{H}(\bm{f}) \bm{u} = k(l'-l)^2 + \sum_a \left(\delta_{a,r'} - \delta_{a,r}\right)^2 d'_a(x_a) \quad >0\,.
\end{equation*}
\vspace{-5mm}

\end{proof}
\begin{figure}
\centering
\includegraphics[scale=0.3]{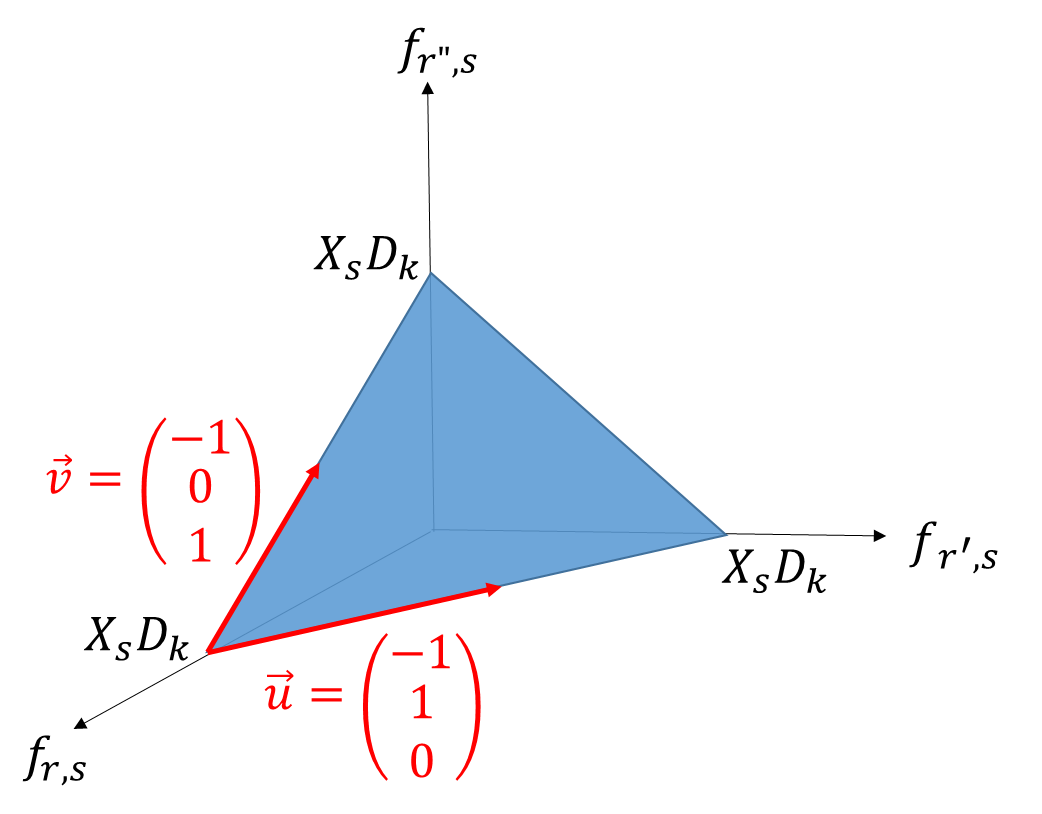}
\caption{Feasible set for class $s$ and trip $D_k$, which allows three different paths.}
\label{fig:only}
\vspace{-3 mm}
\end{figure}

In practice, the uniqueness of WE allows the TNO or the ENO to predict accurately drivers' behavior and in turn, to design good incentives according to their own objectives.
Prop.~\ref{prop:sufficient_condition} and~\ref{cor:exist-and-uniq-we} show that this property depends on the nonflexible profile.
Moreover, as any local minimum of the function defined in equation (\ref{eq:beckmann}) is a Wardorp equilibrium, standard algorithms like gradient descent or simulated annealing can be used to find these points.     
Our coupled game and this direct method to find its WE constitute a privileged framework to design incentives, compared to iterative and decomposed models~\cite{wei2017, ALIZADEH17}.
Numerical examples of incentive design are given in next section.


\section{Numerical results}
\label{sec:numerical}

The first numerical experiment will complement the theoretical analysis of the charging problem by testing on real data the validity of the equivalent condition given in Prop.~\ref{prop:sufficient_condition} for an increasing charging unit price $\lambda_e$, leading to a unique WE of the coupled driving-charging problem.
Then, the sensitivity of the WE will be studied, with respect to realistic parameters like gasoline price or when tolls are imposed in order to reduce an environmental cost. For this, experiments will be conducted on a simple framework made of three parallel roads. \\

\vspace{-2mm}

\ul{\textbf{Notations}}: because multiple days are considered in this part, $\bm{\ell_0}(d)$ is used for the nonflexible consumption~profile~of~day~$d$. \\

\vspace{-2mm}
For all the simulations, cost functions are time-independent and quadratic: $\forall t\in\{1,\dots,T\}, \, f_t(\ell_t)=f(\ell_t)=\eta \ell_t^2$.

\vspace{-0.2 cm}
\subsection{Properties of charging unit price $\lambda_e$}
\begin{figure}
\centering
\includegraphics[scale=0.4]{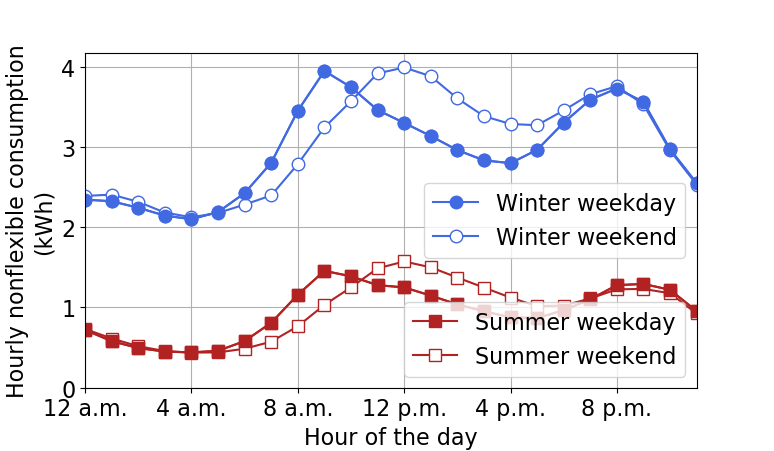}
\caption{Sample of Recoflux statistical model dataset.}
\label{fig:data0}
\end{figure}

This section focuses on two different sets of real data (resp. from France and Texas, USA) of hourly household electricity consumption throughout a year.
As discussed previously, the properties of the EV charging scheduling of Eq.~\ref{eq:pb-flex-load-schedul} depends on $\mathbf{\ell_0}(d)$ (see, e.g., Prop.~\ref{prop:sufficient_condition}). Numerically, the sensitivity of these properties to the number of time slots $T$ (directly based on their -- common -- duration) is analyzed here.

The first dataset, ``Recoflux", is from Enedis (the main French distribution network operator) and is a statistical representation of a typical French household consumption profile, taking into account electrical heating, water heating and all the other usages\footnote{Available at \url{https://www.enedis.fr/coefficients-des-profils} (RES1$\_$BASE).}.
Fig.~\ref{fig:data0} shows samples of each type~of~days in this dataset.
Note that winter consumption looks like a ``transposed version" of~the~summer~one~due~to~constant~heating.

In the raw data, daily consumption is split in 24 hourly time slots.
Fig.~\ref{fig:Pecan_Street_T}.a shows (star markers) that in this case, Prop.~\ref{prop:sufficient_condition} is not verified during whole summer, meaning that a TNO cannot rely on a unique WE to make its decisions.
To aggregate the nonflexible data to a lower number of time slots ($T<24$), these data are ordered increasingly then summed into the corresponding slots. For example for $T=2$, for each day $d$ the consumption vector $\bm\ell_{0}(d)$ is divided in two: the lower half is summed to get $\ell_{0,1}(d)$ while the higher half is summed to get $\ell_{0,2}(d)$.
Note that consecutive hourly time-slots might not be in the same new time slot, as it is the case for the off-peak hours in France, corresponding to night hours and some of the afternoon hours (from 11pm to 9am and from 3pm to 5pm.
However, the modelling choice done here leads to an underestimation of the number of days for which $\lambda_e$ is increasing.
Indeed, if the chronological order was kept, the $T$ nonflexible load values would be closer to one another (compared to the increasingly ordered ones) so that the ratio of condition~\eqref{eq:suff_cond} would be lower.

By testing the condition of Prop.~\ref{prop:sufficient_condition}, Fig.~\ref{fig:Pecan_Street_T}.a shows when the charging unit price $\lambda_e$ is increasing for each day $d$ of a given year depending on the number of time slots.
For $T>2$ time slots, summer months (from May to October) include days when $\lambda_e$ is not increasing, making it difficult for the TNO and ENO to predict drivers behavior without WE uniqueness.
This is likely to be caused by the absence of heating during these months.
In winter, nearly constant heating makes up a substantial part of consumption, lowering the impact of other electrical appliances consumption variations: the ratios $\tilde{\ell}_{0,t} = \ell_{0,t} / \el$ are then relatively low, so that Prop.~\ref{prop:sufficient_condition} is verified.
This phenomenon is worse with a higher $T$, where $\el$ is made of fewer consumption hours from raw data -- the lowest ones -- which makes it even lower, so that $\tilde{\ell}_{0,t}$ are relatively high.
In conclusion, the higher the number $T$ of time slots, the higher the proportion of days with an increasing charging unit price.

\begin{figure}
\centering
\includegraphics[scale=0.4]{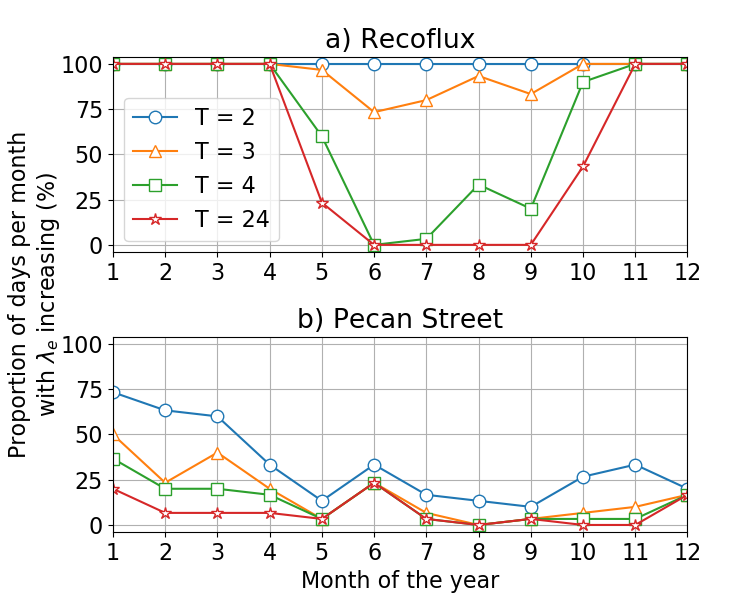}
\caption{Proportion of days per month when $\lambda_e$ is increasing (using Prop.~\ref{prop:sufficient_condition}), for different number $T$ of time slots (top: Recoflux; bottom: Pecan Street).
\textit{A lower $T$ ensures a higher proportion of days where WE is unique.}}
\label{fig:Pecan_Street_T}
\end{figure}
The second dataset is the hourly electric consumption throughout a year of a Texan household, given by the company Pecan Street\footnote{Data available at \url{http://www.pecanstreet.org/}.}.
Compared to the previous dataset, consumption here is higher during summer with the intensive use of air conditioning in Texas.
Here, $\lambda_e$ is increasing only $34\%$ of the days for $T=2$ (circle markers in Fig.~\ref{fig:Pecan_Street_T}.b).
The main reason explaining that is that Texan night consumption is very low compared to day consumption, especially in the summer period.
Note also that mean profiles like Recoflux data might smooth extreme consumption for which Prop.~\ref{prop:sufficient_condition} is not verified and $\lambda_e$ is not increasing, explaining part of the differences between the two datasets in Figs.~\ref{fig:Pecan_Street_T}.a and~\ref{fig:Pecan_Street_T}.b.


\subsection{Sensitivity of Wardrop Equilibrium}
\label{sec:param}
 \begin{figure}
     \centering
     \includegraphics[width = 0.24\textwidth]{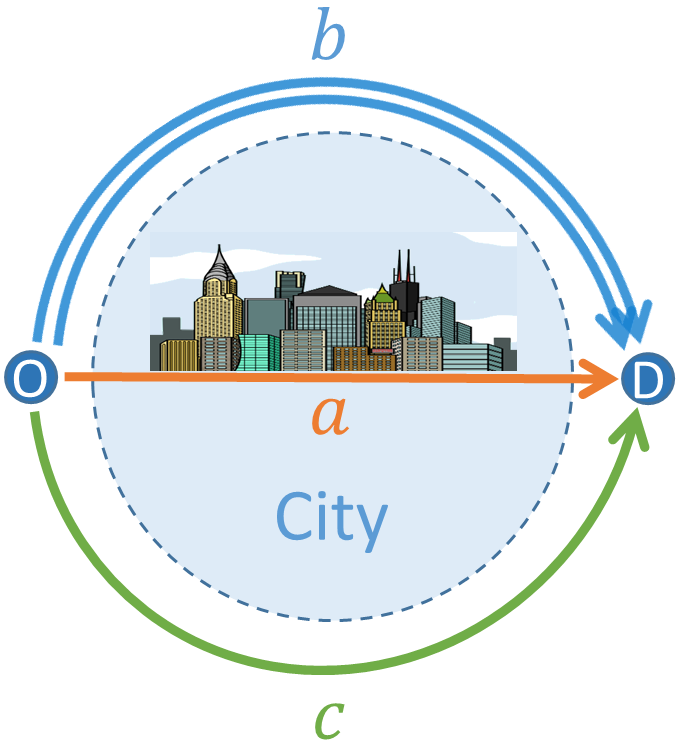}
     \caption{Schematic representation of the TAP network considered. \textit{A simple setting allowing a tractable analysis of the proposed coupled charging-and-driving concept, and an evaluation of associated urban externalities.}}
     \label{fig:schema}
     \vspace{-3 mm}
 \end{figure}

While the first numerical experiment only focused on the charging part of the problem, the two following sections consider the whole coupled problem of driving and charging, showing in particular how the WE is sensitive to the main parameters. The parameters of the problem are set as follows, unless otherwise specified: $T=2$ time slots; the nonflexible load $\bm{\ell_{0}}$ is $\ell_{0,1}= 16.7$kWh and $\ell_{0,2}= 25.6$kWh, which are the corresponding average  -- over the days of the considered year -- values in the Recoflux dataset; a quadratic electricity cost function $f_t(\ell)=f(\ell)=\eta \ell^2$ with $\eta=\num{0.01}$\euro/kWh$^2$ which allows scaling the charging unit price $\lambda_{e}$ into the interval $[0.18,0.21]$\euro/kWh for $L_e \in [0,40]$\% of total nonflexible energy consumption (order of magnitude of EV charging consumption relatively to other usages). Note that with the chosen quadratic electricity cost function, $\bm{\ell_{0}}$ verifies the condition of Prop.~\ref{prop:sufficient_condition}, which ensures the uniqueness of WE. In the simulations, WE is computed thanks to Beckmann's method as the unique solution of a strictly convex optimization problem (see explanations following Prop.~\ref{prop:beckmann}).

The road network under study here is made of two nodes, linked by three parallel links $a$, $b$ and $c$ (with $l_a<l_{b}$ and $l_a<l_{c}$). This setting represents a city where drivers can either directly cross the city (path $a$) or take any ring road (path $b$ or $c$) in order to join the opposite side of the city (thus $l_{b}=l_c=\frac{\pi}{2}l_a$).
The length $l_a = 30$km of arc $a$ is a good approximation of the daily mean individual driving distance in France (following ENTD\footnote{\textit{Enqu\^{e}te Nationale Transports et D\'{e}placements}: \url{https://utp.fr/system/files/Publications/UTP_NoteInfo1103_Enseignements_ENTD2008.pdf} (in French).} 2008).
For the speed limits on the arcs, the example of Paris was taken: $v_a=50$km/h and $v_{b} =v_c= 70$km/h.
The capacities of the arcs are set to $C_b=X$ and $C_a=C_c = X/2$, with $X$ the total number of vehicles, meaning that ring road $b$ contains twice as many lanes as the two other paths.
The parameters $\alpha = 2$ and $\beta = 4$ of the BPR functions $d_r$ defined in~\eqref{eq:BPR} are determined empirically~\cite{JEIHANI}, so that if all vehicles choose the same path, the corresponding driving time will be three times longer than the free flow reference time.
The value of time $\tau$ is set to 10\euro/h according to a French government report\footnote{\url{http://www.strategie.gouv.fr/sites/strategie.gouv.fr/files/archives/Valeur-du-temps.pdf}.}.
For GV, the consumption parameters are $\lambda_g = 1.5$\euro/L and $m_g=0.06$L/km and for EV, $m_e = 0.2$kWh/km~\cite{FONTANA13}.
Finally, 
the proportion $X_e$ of EV is supposed to be 50$\%$, in line with future predictions.
Tolls are all set to zero in order to clearly analyze the effect of congestion and energy on the behavior of EV and GV.

 \begin{figure}
    \centering
    \includegraphics[width = 0.4\textwidth]{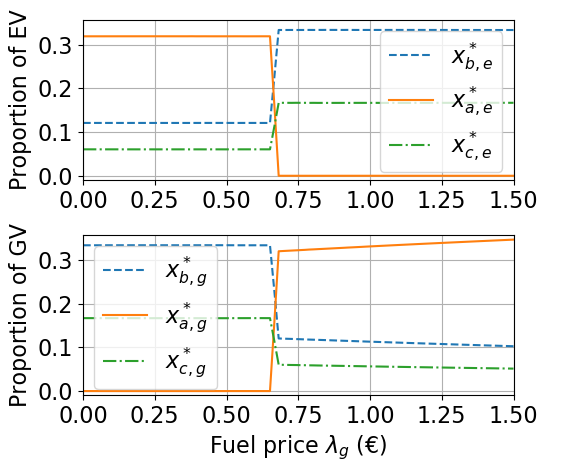}
    \caption{Evolution of the WE $\boldsymbol{x^*}$ with respect to fuel price $\lambda_g$, for $X_g=X_e=0.5$.
    \textit{There is a threshold on $\lambda_g$ corresponding to a switch of traffic equilibrium.}
    }
 \vspace{-0.4 cm}
   \label{fig:WE_lambdag}
\end{figure}
First, the sensitivity of WE w.r.t. the fuel price $\lambda_g$ is assessed in Fig.~\ref{fig:WE_lambdag}.
Starting from the actual fuel price (around $\lambda_g = 1.50$\euro/L), the cost of energy consumption per distance unit is higher for GV: $m_g \lambda_g > m_e \lambda_e(L_e(\bm{x}_e))$ for any EV flows $\bm{x}_e$.
All GV then use the shortest arc $a$ ($x_{a,g}^*=X_g=0.5$), while most of EV use the ring roads to avoid congestion ($x_{a,e}^*<X_e=0.5$).
While $\lambda_g>0.68$\euro/L the obtained WE remains the same.
Then, and down to $\lambda_g>0.65$\euro/L, the proportions of EV and GV at equilibrium are inverted.
This interval corresponds to a threshold on (exogeneous) GV energy cost when $m_g \lambda_g$ becomes lower than $m_e \lambda_e(L_e(\bm{x}_e))$ for any EV flows $\bm{x}_e$.
Then, all EV use arc $a$ while there are fewer and fewer GV on it, because going on the long ring roads is not that expensive with such a low fuel price.
In terms of decision-making, lowering taxes on fuel (leading to a smaller $\lambda_g$) may lower the level of GV traffic inside cities.
Note that the two ring roads are used simultaneously, in proportions such that there are twice as many vehicles on the ring road with the larger capacity, so that the travel time is the same for both.
\vspace{-0.2 cm}

\subsection{Minimizing Environmental Cost}
\label{sec:env}
This section focuses on the optimization of a global function of the TNO which depends on the WE $\mathbf{x^*}$. One particular function could be the level of pollution, where a TNO controls the toll prices on all paths such that the global level of pollution is minimal. Only GV release polluting substances into the air, therefore the level of pollution depends on the expected number of GV on each arc. Based on Little's formulae of queueing theory \cite{Kleinrock}, this expected number of GV on arc $a$ is the product of the rate $x_{a,g}$  and the expected travel duration $d_a(x_a)$. Note that this upper-level optimization problem is not performed by the EV aggregator but an autority/regulator of the transportation network that is the TNO. 
The TNO determines only here the GV toll $t_{a,g}$ on the arc crossing the city. For all the other traffics, there is no toll applied, which is typically the kind of incentive large urban cities like London proposed (vehicles have to pay a toll in order to go across the city downtown).

The purpose of this toll is to limit the number of GV contributing to the environmental cost, defined as:
\begin{equation}
c_\text{env}(\mathbf{x}) = \gamma_a  x_{a,g} d_a(x_a) + x_{b,g} d_b(x_b) + x_{c,g} d_c(x_c)\,,
\end{equation}
with $\gamma_a\geq 1$ the weight of environmental cost on arc $a$ (inside the city): this represents a willingness to diminish (local) pollution in the city center.
Note also that at equilibrium, the environmental cost function $c_\text{env}$ depends implicitly on the toll $t_{a,g}$ through the WE flows $\bm{x}^*$. The TNO can thus control this environmental cost, solving an upper-level optimization problem:
\begin{equation}
c_\text{env}^* = \min_{t_{a,g}\geq 0}~c_\text{env}\big(\mathbf{x^*}(t_{a,g})\big)\, ,
\label{eq:c_env}
\end{equation}
with $\mathbf{x^*}(t_{a,g})$ the (unique) WE associated to $t_{a,g}$.
As there is no explicit formulation of the WE (as a function of $t_{a,g}$), it is difficult to determine explicit solutions for this optimization problem, or even to integrate optimality conditions of the lower-level problem (WE between the vehicles) into the upper one (of the TNO). In turn, an exhaustive search on $t_{a,g}$ with a 0.01\euro~increment is performed.
\begin{figure}
\centering
\includegraphics[scale=0.4]{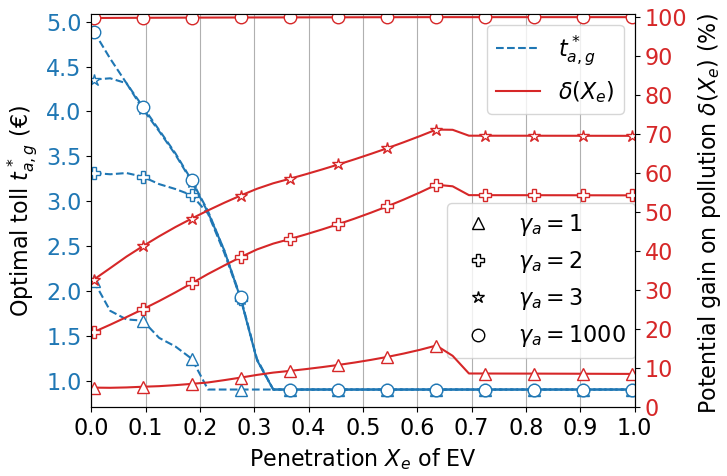}
\caption[Evolution of the optimal toll and the potential environmental gain, w.r.t. the proportion of EV.]{Evolution of the optimal toll $t_{a,g}^*$ and the potential environmental gain $\delta(X_e)$ w.r.t. the proportion $X_e$ of EV, for different $\gamma_a$ scenario.
\textit{As $X_e$ grows, the traffic operator induces a bigger impact on pollution if choosing the optimal toll, which decreases.}}
\vspace{-0.4 cm}
\label{fig:env3}
\end{figure}
\begin{figure}
\centering
\includegraphics[scale=0.45]{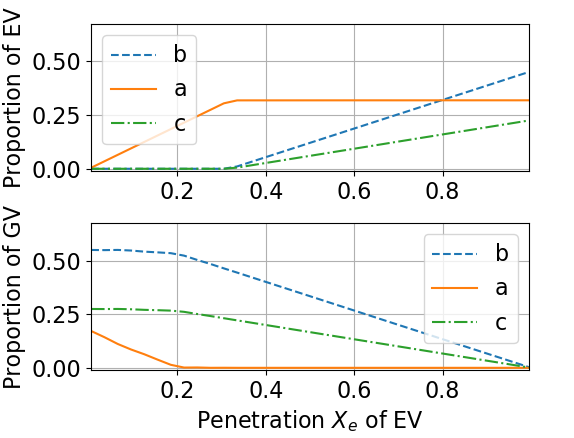}
\caption{Distribution of EV and GV at the WE which minimizes pollution, as the proportion of EV grows and for $\gamma_a = 2$. Ex: for $X_e=0.33$, few GV ($x^*_{a,g}=0.47<X_g=0.67$) while all EV ($x^*_{a,e}=X_e=0.33$) are on $a$.
\textit{As $X_e$ grows, EV replace GV on arc $a$.}}
\vspace{-0.4 cm}
\label{fig:env_eq}
\end{figure}

The proportion $X_e$ of EV has an important impact on this global minimization problem. Then, the potential environmental gain $\delta(X_e)$ by taxing optimally, relatively to the reference case $c_\text{env}^{(0)}$ obtained when $t_{a,g}=0$, depends on this proportion as:
\begin{equation} 
\delta (X_e) = \frac{|c_\text{env}^* - c_\text{env}^{(0)} |}{c_\text{env}^{(0)}}\,,~ \text{ with } c_\text{env}^* \text{ of}~\eqref{eq:c_env} \, .
\label{eq:pot_gain}
\end{equation}

For the NTO, it is interesting to know the environmental gain corresponding to an optimal determination of tolls that minimize the environmental cost. The optimal toll value $t_{a,g}^*$ depends on the proportion of EV $X_e$, as illustrated in Fig.~\ref{fig:env3}.
In this figure, $t_{a,g}^*$ and the corresponding maximal environmental gain $\delta$ are computed for any EV penetration $X_e$, for different environmental policy scenario (represented by $\gamma_a$, the importance given to local pollution inside the city).
In general, for a higher EV penetration, arc $a$ will be naturally more congested, so that a lower toll $t_{a,g}^*$ is sufficient to optimally reduce the number of GV crossing the city, which explains why the blue dotted curves $t_{a,g}^*(X_e)$ are decreasing.
Regarding the different scenario, the higher $\gamma_a$ is, the higher $t_{a,g}^*$ must be to prevent GV from crossing the city and the higher the potential environmental gain $\delta$ is.
For $\gamma_a = 1000$ (circle markers), the city's environmental cost is almost equivalent to the environmental cost only on arc $a$: In the optimal WE, there is no GV crossing the city, so that our toll mechanism corresponds to a restriction one where GV would be forbidden to cross the city.
For $\gamma_a=2$ (cross markers), there may be GV crossing the city at the optimal WE, as shown in Fig.~\ref{fig:env_eq} for EV penetrations lower than 20\%.
In this case of high GV proportion, if all GV used only the ring roads there would be too much congestion, i.e., local pollution.
For this reason, the optimal toll may be lower than the one deterring all GV from crossing the city (see the cross and circle markers in Fig.~\ref{fig:env3}).
For EV penetrations higher than 33\%, the proportion of EV crossing the city is the same (see Fig.~\ref{fig:env_eq}), so that the total cost associated with arc $a$ is constant and the optimal toll $t_{a,g}^* = 0.9$\euro~remains constant (see Fig.~\ref{fig:env3}).


\section{Conclusion}\label{sec:conc}
With the growing number of EV, new challenges related to driving and charging matters arise.
In our routing game, the coupling between driving and charging was modeled by a non-separable, non-linear charging unit price function $\lambda_e$, which depends on EV flows on the overall transportation network.
In this work, the existence of a Wardrop Equilibrium (WE) was proved for any transportation network, by generalizing Beckmann's method. 
Moreover, it was shown that the uniqueness of the WE depends on the monotonicity of the charging unit price, which depends itself on the profile of the nonflexible load. 

Numerically, the properties of WE have been analyzed on two real datasets of nonflexible loads, observing in particular that uniqueness depends on seasonal effects and electrical consumption habits. 
Other numerical experiments on a network example with parallel arcs illustrate two kinds of incentives sent by the TNO and their effects on the WE.
First, lowering taxes on fuel incite GV to use longer arcs (typically ring roads).
Second, a toll system helps to control the proportion of GV on the shorter arc (typically crossing a city center).

In a future work, an accurate model of the distribution grid will be added to our framework for a better description of the costs caused by the charging operation.
On top of that, a theoretical framework will be set to study more thoroughly the incentive problem for all the operators, making it a bi-level game.

\vspace{-3mm}


\bibliographystyle{ieeetr}

\bibliography{myrefs.bib}


\begin{IEEEbiographynophoto}{Benoît Sohet}
graduated from \'Ecole Normale Sup\'erieure de Cachan in 2018.
He received the M.Sc. degree in climate physics at Universit\'e de Paris-Saclay in 2017 and the M.Sc. degree in applied mathematics at ENSTA ParisTech in 2018.
He is currently pursuing a Ph.D. in applied mathematics at EDF R\&D and at Avignon Universit\'e.
His research is related to game theory and its applications to the electrical and transportation systems.
\end{IEEEbiographynophoto}

\begin{IEEEbiographynophoto}{Yezekael Hayel}
(M’08, SM'17) received the M.Sc. degree in computer science and applied mathematics from the University of Rennes~1 in 2002, and the Ph.D. degree in computer science from the University of Rennes~1 and INRIA in 2005. He is an Assistant/Associate Professor with the University of Avignon, France, since 2006. He has held a tenure position (HDR) since 2013. He was a Visiting
Professor with the NYU Polytechnic School of Engineering from 2014 to 2015. He was the Head of the Computer Science/Engineering Institute with the University of Avignon from 2016 to 2019. His research interests include the performance
evaluation and optimization of complex network systems based on game theoretic and queuing models. He was involved at applications in communication/ transportation and social networks, such as wireless flexible networks, bio-inspired and self-organizing networks, and economic models of the Networks. He is associate editor of the GAMES journal.
\end{IEEEbiographynophoto}

\begin{IEEEbiographynophoto}{Olivier Beaude}
received the M.Sc. degree in applied mathematics and economics from \'Ecole polytechnique, Palaiseau, France, in 2010, the M.Sc. degree in optimization, game theory, and economic modeling from Universit\'e Paris VI, Paris, France, in 2011, and the Ph.D. degree from the Laboratory of Signals and Systems, CentraleSup\'elec, and the Research and Development Center, Renault, France, in 2015. He is currently a Research Engineer with EDF Research and Development, where his research interests include controlling the impact of EV charging on the grid, and the development of coordination mechanisms, and incentives for local electricity systems.
\end{IEEEbiographynophoto}

\begin{IEEEbiographynophoto}{Alban Jeandin}
holds an Engineering degree from T\'el\'ecom Paris and \'Ecole Nationale des Ponts et Chauss\'ees, with specialization in telecommunications, intelligent transportation systems and smart cities. He joined EDF R\&D in 2011. From 2011 to 2016, he was involved in  the  French  smart  metering  roll out program. He is currently leading EDF R\&D’s research project on vehicle-grid integration and smart charging. His research interests include Smart Metering, Smart Grids, and impact of EV charging on the grid, data exchange standards for Smart Charging and V2G, and Smart Cities.
\end{IEEEbiographynophoto}

\vfill


\end{document}